\newcommand{\jj}{\mathrm{j}}
\newcommand{\Ex}[1]{ \mathcal{E} \left[  #1 \right]}
\newcommand{\dbc}[1]{ \left[  #1 \right]}
\newcommand{\tr}[1]{\mathrm{tr} \left\lbrace #1 \right\rbrace}
\newcommand{\brc}[1]{ \left(  #1 \right)}
\newcommand{\abs}[1]{  \left\vert  #1 \right\vert}
\newcommand{\norm}[1]{  \left\Vert  #1 \right\Vert}
\newcommand{\her}{\mathsf{H}}
\newcommand{\xx}{{\mathrm{x}}}
\newcommand{\yy}{{\mathrm{y}}}
\newcommand{\trp}{{\mathsf{T}}}
\newcommand{\setC}{{\mathbb{C}}}
\newcommand{\mR}{{\mathbf{R}}}
\newcommand{\mI}{{\mathbf{I}}}
\newtheorem{remark}{Remark}
\newtheorem{proposition}{Proposition}
\newtheorem{corollary}{Corollary}
\def\BibTeX{{\rm B\kern-.05em{\sc i\kern-.025em b}\kern-.08em
    T\kern-.1667em\lower.7ex\hbox{E}\kern-.125emX}}
\begin{document}
	
	\begin{acronym}
		\acro{mimo}[MIMO]{multiple-input multiple-output}
		\acro{miso}[MISO]{multiple-input single-output}
		\acro{siso}[SISO]{single-input single-output}
		\acro{csi}[CSI]{channel state information}
		\acro{awgn}[AWGN]{additive white Gaussian noise}
		\acro{iid}[i.i.d.]{independent and identically distributed}
		\acro{ut}[UT]{user terminal}
		\acro{bs}[BS]{base station}
		\acro{snr}[SNR]{signal-to-noise ratio}
		\acro{rf}[RF]{radio frequency}
		\acro{los}[LoS]{line of sight}
		\acro{nlos}[NLoS]{non-line of sight}
		\acro{irs}[IRS]{intelligent reflecting surface}
		\acro{ula}[ULA]{uniform linear array}
		\acro{aoa}[AoA]{angle-of-arrival}
		\acro{aod}[AoD]{angle-of-departure}
		\acro{mrt}[MRT]{maximum ratio transmission}
		\acro{noma}[NOMA]{non-orthogonal multiple access}
		\acro{mse}[MSE]{mean squared error}
	\end{acronym}

\title{How Should IRSs Scale to Harden Multi-Antenna Channels?
\thanks{This work has been accepted for presentation in 2022 IEEE Sensor Array and Multichannel Signal Processing Workshop (SAM ) in Trondheim, Norway. The link to the final version in the Proceedings of SAM will be available later.}
}

\author{
	\IEEEauthorblockN{
		Ali Bereyhi\IEEEauthorrefmark{1},
		Saba Asaad\IEEEauthorrefmark{1},
		Chongjun Ouyang\IEEEauthorrefmark{4},
		Ralf R. M\"uller\IEEEauthorrefmark{1},
		Rafael F. Schaefer\IEEEauthorrefmark{3}, and
		H. Vincent Poor\IEEEauthorrefmark{2}}
\IEEEauthorblockA{
	\IEEEauthorrefmark{1}Friedrich-Alexander Universit\"at Erlangen-Nürnberg (FAU), 
	\IEEEauthorrefmark{4}Beijing University of Posts and Telecommunications, \\
	\IEEEauthorrefmark{3}University of Siegen, 
	\IEEEauthorrefmark{2}Princeton University
	%\thanks{ Saba Asaad, Ali Bereyhi and Ralf Müller are with the Friedrich-Alexander Universit\"at; emails: \{saba.asaad, ali.bereyhi,ralf.r.mueller\}@fau.de. Chongjun Ouyang is with the Beijing University of Posts and Telecommunications; email: dragonaim@bupt.edu.cn. H. Vincent Poor is with the Princeton University; email: poor@princeton.edu.}
	\thanks{This work was supported in part by the German Research Foundation (DFG), under Grant No. MU 3735/7-1, in part by the German Federal Ministry of Education and Research (BMBF) under Grant 16KIS1242, and in part by the U.S. National Science Foundation under Grant CCF-1908308.}}}

\maketitle

\begin{abstract}
This work extends the concept of channel hardening to multi-antenna systems that are aided by intelligent reflecting surfaces (IRSs). For fading links between a multi-antenna~transmitter and a single-antenna receiver, we derive~an~accurate~approximation for the distribution of the input-output~mutual~information when the number of reflecting elements grows large. The asymptotic results demonstrate that by increasing the number of elements on the IRS, the end-to-end channel hardens as long as the physical dimensions of the IRS grow as well. The growth rate however need not to be of a specific order and can be significantly sub-linear. The validity of the analytical result is confirmed by numerical experiments.
\end{abstract}

\begin{IEEEkeywords}
Intelligent reflecting surfaces, channel hardening, large-system analysis.
\end{IEEEkeywords}

\section{Introduction}
Channel hardening is a fundamental large-scale property of \ac{mimo} systems. This property indicates that key performance metrics\footnote{For instance, the channel capacity. Such metrics are in general random, due to the randomness of the fading process.} of a \ac{mimo} channel become deterministic, as the dimensions of the \ac{mimo} channel grow large on at least one side \cite{hochwald2004multiple,asaad2018massive}. From the information-theoretic point of view, this is the key property that leads to significant performance gains of massive \ac{mimo} systems \cite{bjornson2017massive}.

This study aims to extend analytically the notion of channel hardening to multi-antenna systems that are enhanced by \acp{irs}; see \cite{wu2019towards,li2019joint,tang2020joint,wu2021irs,bereyhi2020secure,asaad2021designing,asaad2022irs,zhang2021reconfigurable} and the references therein for discussions on \ac{irs}-aided \ac{mimo} systems and their applications. Here, we address this question: How do \ac{irs}-aided \ac{mimo} systems with \textit{finite} transmit and receive array sizes behave as the \textit{number of \ac{irs} elements} grows large?

\subsection{Related Work and Main Contributions}
Several studies address the above question considering some simplified models. In \cite{bjornson2020rayleigh}, channel hardening is discussed in the context of an \ac{irs}-aided setting with a single-antenna transmitter and receiver. Extensions to scenarios with multiple \acp{irs} and other fading models are given in \cite{zhang2019analysis,ibrahim2021exact}. The studies in \cite{wang2021performance} and \cite{jung2020performance} further investigate channel hardening for~\ac{irs}-aided \ac{noma} systems and a fully stand-alone \ac{irs}-based transmitter, respectively.

The studies most closely related to this work are \cite{wang2020intelligent,wang2021massive,zhang2021outage}. Starting from \cite{wang2020intelligent}, Wang et al. show that the conventional form~of~the channel hardening property does not hold in \ac{irs}-aided \ac{mimo} settings, i.e.,  by increasing the number of transmit antennas while keeping the number of \ac{irs} elements fixed, the end-to-end channel does not harden. The following study, i.e., \cite{wang2021massive}, demonstrates that in fact \ac{irs}-aided channels harden when the \textit{number~of~\ac{irs} elements} grows large. Finally, the study in \cite{zhang2021outage} derives the distribution of the input-output mutual information for an \ac{irs}-aided \ac{mimo} setting.

Although \cite{wang2020intelligent,wang2021massive,zhang2021outage} extend the concept of channel hardening to \ac{irs}-aided \ac{mimo} systems, their derivations are carried~out under several unrealistic assumptions, e.g., simplified channel models. We deviate from these simplifications and give~a~rigorous large-system characterization of the input-output mutual information for an \ac{irs}-aided setting. The analysis takes into account the physical scaling of the \ac{irs} and its impacts on the spatial correlation among the \ac{irs} elements. The derivations lead to this interesting finding which agrees with intuition: In a multi-antenna system being aided with an \ac{irs} whose physical dimensions grow, the end-to-end channel hardens, regardless of how fast the \ac{irs} dimensions grow. The hardening speed however depends on the growth order.

\subsection{Notation}
Scalars, vectors and matrices are indicated by nonbold,~bold lower-case, and bold upper-case letters, respectively.~The~transposed conjugate of $\mathbf{H}$ is shown by $\mathbf{H}^{\mathsf{H}}$, and $\dbc{\mathbf{H}}_{nm}$ is the entry of $\mathbf{H}$ at row $n$ and column $m$. The $N\times N$ identity and all-one matrices are denoted by $\mathbf{I}_N$ and $\boldsymbol{1}_N$, respectively. The mathematical expectation is denoted by $\mathcal{E}$ and $\mathcal{CN}\left( \eta,\sigma^2\right) $ is a complex Gaussian distribution with mean $\eta$ and variance $\sigma^2$.

\section{System Model}
\label{sec:sys}
We consider a \ac{miso} system in which a transmitter with an $M$-antenna array communicates with a single-antenna receiver. An \ac{irs} with $N$~reflecting~elements is further established to enhance the communication link between the transmitter and the receiver. Each element of the \ac{irs} reflects its received signal after applying a tunable phase-shift. We assume that both direct and reflection links are available and experience slow and frequency-flat fading processes. The received signal at the destination is hence the superposition of two components: one that is received through the direct path, and one that is reflected by the \ac{irs}.

Let $x_m$ denote the symbol sent by the $m$-th element of the transmit array satisfying the transmit power constraint 
\begin{align}
	\sum_{m=1}^{M} \Ex{\abs{x_m}^2} \leq \rho
\end{align}
for transmit power $\rho$. The received signal is then given by
\begin{align}\label{eq:rx}
	y=\sum_{m=1}^{M} h_{\mathrm{d}, m} x_m +\sum_{m=1}^{M} \sum_{n=1}^{N} \mathrm{e}^{-\jj\beta_n} h_{\mathrm{r}, n} t_{nm} x_m + z,
\end{align}
where
\begin{itemize}
	\item $z$ is zero-mean and unit-variance \ac{awgn}, 
	\item $h_{\mathrm{d}, m}$ is the direct channel coefficient between element $m$ at the transmit array and the receiver,
	\item $t_{nm}$ is the channel coefficient between the $m$-th element at the transmit array and the $n$-th reflecting element,
	\item $h_{\mathrm{r}, n}$ is the channel coefficient between the $n$-th reflecting element and the receiver, and
	\item $\beta_n$ is the phase-shift applied by reflecting element $n$.
\end{itemize}
The channel state information is assumed to be known at both sides of the channel.
\subsection{Channel Model}
The transmit array and \ac{irs} are assumed to be rectangular uniform planar arrays with $M_\xx$ and $N_\xx$ horizontal and $M_\yy$ and $N_\yy$ vertical isotropic elements, respectively, i.e., $M=M_\xx M_\yy$ and $N=N_\xx N_\yy$.~Each pair of neighboring transmit antennas are distanced with $\ell_\xx$ and $\ell_\yy$ on the horizontal and vertical axes, respectively. The horizontal and vertical distances between two neighboring reflecting elements are further denoted by $d_\xx$ and $d_\yy$, respectively. We assume that $\ell_\xx$, $\ell_\yy$, $d_\xx$ and $d_\yy$ are smaller than a half wave-length.

%\cite{bjornson2020rayleigh,zhang2021reconfigurable}

We consider a classical scenario in which the \ac{los} link in the direct path\footnote{Note that the direct path is different from the line of sight.} is blocked. %We further assume~$\ell_\xx$ and $\ell_\yy$ are large enough, such that the spatial correlation at the transmitter can be ignored\footnote{This is feasible, since we do not assume $M$ to be large.}. 
The~direct~path~between antenna $m$ at the transmitter and the receiver is hence modeled by as a standard Rayleigh fading process, i.e., 
%\begin{align}
	$h_{\mathrm{d},m}=\sqrt{\alpha_\mathrm{d} A_M } \tilde{h}_{\mathrm{d},m}$, %
%\end{align}
where $\alpha_\mathrm{d}$ models the effective path-loss, $A_M$ denotes the area of a single element on the transmit array, i.e.,  $A_M = \ell_\xx \ell_\yy$, and $\tilde{h}_{\mathrm{d},m}$ is zero-mean and unit-variance complex Gaussian, i.e., $\tilde{h}_{\mathrm{d},m}\sim \mathcal{CN}\left(0, 1\right)$. In the sequel, we compactly denote~the direct channel vector as $\mathbf{h}_\mathrm{d}=[h_{\mathrm{d},1}, \cdots, h_{\mathrm{d},M}]^\trp$.

The \ac{irs} is often deployed flexibly in the network. We hence assume that the \ac{irs} is located at a moderate distance in the transmitter sight, such that the communication link between the transmitter and the \ac{irs} is dominated by a \ac{los} component. As a result, we represent the channel from the transmitter to the \ac{irs} by ${\mathbf{T}}\in \mathbb{C}^{N \times M}$, where $t_{nm}=[{\mathbf{T}}]_{nm} = %$ denotes the channel coefficient between the $m$-th element at the transmit array and the $n$-th reflecting element and is given by
%\begin{align}
	%t_{nm}=
	\sqrt{\alpha_{\mathrm{s}} A_N }\bar{t}_{nm}$.
%\end{align}
Here, $\alpha_{\mathrm{s}}$ is the path-loss, $A_N$ is the area of a reflecting element, i.e., $A_N = d_\xx d_\yy$, and $\bar{t}_{nm}$ denotes the \ac{los} component.

\begin{remark}
	In general, the effective area of a single element depends on the wave-length. The considered simple model for $A_N$ and $A_M$ however follows from the fact that we assume the neighboring elements on the transmit array and the \ac{irs} to be distanced less than a half wave-length.
\end{remark}

In practice, the receiver is in a relatively large~distance~from both transmitter and \ac{irs}; however, it can yet be in~the~sight~of the \ac{irs}. We hence assume that the link between the \ac{irs} and the receiver has both \ac{los} and \ac{nlos} components. This means that the coefficient of the channel between the $n$-th reflecting element and the receiver is modeled as
\begin{align} \label{eq:hr} 
	h_{\mathrm{r}, n}=\sqrt{\alpha_{\mathrm{ r}} A_N }\left(\sqrt{\frac{\kappa_\mathrm{r}}{\kappa_\mathrm{r}+1}}\bar{h}_{\mathrm{r}, n} +\sqrt{\frac{1}{\kappa_\mathrm{r}+1}}\tilde{h}_{\mathrm{r}, n}\right), 
\end{align}
where $\alpha_{\mathrm{ r}}$ and $\kappa_\mathrm{r}$ are the path-loss and the Rician~factor,~respectively. The coefficient $\bar{h}_{\mathrm{r}, n}$ further denotes the \ac{los},~and~$\tilde{h}_{\mathrm{r}, n}$ models the small-scale fading process in the \ac{nlos} link.

In \ac{irs}-aided systems, the \ac{irs} is typically considered to be filled by a large number of reflecting elements. Consequently, the distance between neighboring elements is rather small, and hence the spatial correlation among the \ac{irs} elements cannot be ignored. To capture the spatial correlation, we assume that $\tilde{\mathbf{h}}_{\mathrm{r}}=[\tilde{h}_{\mathrm{r}, 1}, \cdots, \tilde{h}_{\mathrm{r}, N}]^\trp$ is a zero-mean complex~Gaussian~process with the covariance matrix $\mR\in \setC^{N\times N}$. Note that due to power normalization, we have $[\mR]_{nn} =1$ for $n\in[N]$. 

The \ac{los} components in the channel model can be written in terms of the array responses. Let $\lambda$ be the wavelength and %define the following operators:
\begin{align}
	\begin{array}{ll}
		i_M\brc{m} =\brc{m-1} \;  \mathrm{mod} \; M_\xx ,  &j_M\brc{m} = \left\lfloor \displaystyle \frac{m-1}{M_\xx} \right\rfloor \vspace*{2mm} \\
		i_N\brc{n} = \brc{n-1}  \;  \mathrm{mod} \;  N_\xx,  &j_N\brc{n} = \left\lfloor \displaystyle \frac{n-1}{N_\xx} \right\rfloor  
	\end{array}
\end{align}
where $x \;  \mathrm{mod} \;  L$ determines $x$ modulo $L$. We further let the exponent functions at azimuth angle $\varphi$ and elevation angle $\theta$ for transmit element $m$ and reflecting element $n$ be
\begin{subequations}
	\begin{align}
	\Phi_m\left( \varphi,\theta\right) &= {i_M\brc{m} \ell_\xx \cos{\theta} \sin{\varphi}  +  j_M\brc{m} \ell_\yy \sin{\theta} },\\
	\Pi_n\left( \varphi,\theta\right) &={i_N\brc{n} d_\xx \cos{\theta} \sin{\varphi} +  j_N\brc{n} d_\yy \sin{\theta} },
	\end{align}
\end{subequations}
respectively. Denoting the wavelength by $\lambda$, the transmit and \ac{irs} array responses are given at $\brc{\varphi,\theta}$ respectively by \cite{bjornson2020rayleigh}
\begin{subequations}
	\begin{align}
		\mathbf{a}_M\left( \varphi,\theta\right) &= \left[ \mathrm{e}^{   \tfrac{2\pi\jj }{\lambda} \Phi_1\left( \varphi,\theta\right)  }, \cdots, \mathrm{e}^{  \tfrac{2\pi\jj }{\lambda} \Phi_M\left( \varphi,\theta\right) } \right]^\trp,\\
		\mathbf{a}_N\left( \varphi, \theta\right) &= \left[ \mathrm{e}^{ \tfrac{2\pi\jj }{\lambda}  \Pi_1\left( \varphi,\theta\right)  }, \cdots, \mathrm{e}^{  \tfrac{2\pi\jj }{\lambda}  \Pi_N\left( \varphi,\theta\right) } \right]^\trp.
	\end{align}
\end{subequations}

Given the array responses, the \ac{los} components  are given~by
%Denote the matrix of \ac{los} channel coefficients between the transmitter and the \ac{irs} by $\bar{\mathbf{T}} \in \setC^{N\times M}$ whose entry $\brc{n,m}$ is $\bar{t}_{nm}$. Moreover, let the  be . We can write
%\begin{subequations}
	\begin{align}
		{\mathbf{T}} &= \sqrt{\alpha_{\mathrm{s}} A_N } \; \mathbf{a}_N\left( \varphi_{\mathrm{r}1},\theta_{\mathrm{r}1}\right) \mathbf{a}_M\left( \varphi_{\mathrm{t}2} , \theta_{\mathrm{t}2} \right)^\mathsf{H}, \label{a1}%\\
		%\mathbf{\bar{h}}_{\mathrm{r}} &= \mathbf{a}_N\left( \varphi_{\mathrm{t}1} , \theta_{\mathrm{t}1} \right), \label{a2}
	\end{align}
%\end{subequations}
%and $\mathbf{\bar{h}}_{\mathrm{r}} = \mathbf{a}_N\left( \varphi_{\mathrm{t}1} , \theta_{\mathrm{t}1} \right)$, where 
and $\mathbf{\bar{h}}_{\mathrm{r}}=[\bar{h}_{\mathrm{r}, 1}, \cdots, \bar{h}_{\mathrm{r}, N}]^\trp = \mathbf{a}_N\left( \varphi_{\mathrm{t}1} , \theta_{\mathrm{t}1} \right)$. %is the \ac{los} component of the channel between the \ac{irs} and the receiver. 
Here, $\brc{\varphi_{\mathrm{r}1},\theta_{\mathrm{r}1}}$ is the \ac{aoa} at the \ac{irs}, $\brc{\varphi_{\mathrm{t}1} , \theta_{\mathrm{t}1} }$ denotes the \ac{aod} from the \ac{irs}, and $\brc{\varphi_{\mathrm{t}2} , \theta_{\mathrm{t}2} }$ is the \ac{aod} from the transmitter.

\subsection{Channel Capacity}
Consider the end-to-end channel $\mathbf{h}=[h_1, \cdots, h_M]^\trp$, where 
\begin{align}
	h_m%&=h_{\mathrm{d}, m}+\sum_{n=1}^{N} \phi_n h_{\mathrm{r}, n} t_{nm}\\
	&=h_{\mathrm{d}, m}+\sum_{n=1}^{N} \mathrm{e}^{-\jj\beta_n} h_{\mathrm{r}, n} t_{nm}.
\end{align}
For a given realization of $\mathbf{h}$, the channel capacity is achieved by maximum ratio transmission\footnote{Note that this expression gives the  maximum input-output mutual information for the given channel realization and under an average power constraint. By considering the \textit{ergodic} capacity as the metric and taking into account the channel fluctuations over time, we need further to consider power allocation over-time which is optimally performed using the water-filling algorithm \cite{yu2004iterative}.} and is given by\cite{tse2005fundamentals,loyka2017capacity}
\begin{align}\label{eq:I}
	\mathcal{C} =\log_2 \left(1+ \rho \norm{\mathbf{h} }^2 \right).
\end{align}

Due to fading, the channel capacity expression is a random process whose statistics determine various~performance~metrics, e.g., ergodic capacity and outage probability. The main goal of this study is to characterize the statistics of $\mathcal{C}$, when $N$ is asymptotically large.

\section{Large-System Analysis}
\label{sec:main}
The basic limiting scenario is to consider an extreme case in which the number of reflecting elements grows large while the distances between neighboring elements are kept fixed. This is however an unrealistic assumption, as the \ac{irs} area in this case scales linearly with the number of reflecting elements. In practice, the \ac{irs} area is restricted, and hence by growing the number of \ac{irs} elements, the reflecting elements are distanced closer on the surface. This leads to a smaller effective area for \ac{irs} elements and therewith to higher spatial correlation \cite{ivrlac2010toward}.

To take the above scaling fact into account, we consider a basic scaling model for the \ac{irs} area. Namely, we assume that the area of each \ac{irs} element scales as~$A_N = {A_0}N^{-q}$ %
%\end{align}
for some constant $A_0$ and $0\leq q \leq 1$. This means that the total area of the \ac{irs} scales as $A_{\rm IRS} =  A_0 N^{1-q}$.
%\end{align}
This scaling addresses the limiting scenarios between the two extreme cases:
\begin{itemize}
	\item $q=0$ is the idealistic case in which the distances between neighboring elements are kept fixed, i.e., the area of each \ac{irs} element is fixed.
	\item $q=1$ addresses the case in which the total area of the \ac{irs} is fixed. In this case, the area of each reflecting element shrinks reverse-linearly in $N$.
\end{itemize}
%By setting $0< q < 1$, an intermediate scaling is considered in which the \ac{irs} surface grows sub-linearly by $N$.

\begin{remark}
	Note that for $q\neq 1$, the above scaling implies that by sending $N\rightarrow \infty$, the area of \ac{irs} also grows asymptotically large. One may thus conclude that the far-field model for the \ac{irs} array response is no longer valid in the large-system limit. To avoid such inconsistency, we assume that the distances among the terminals, i.e., transmitter, receiver and \ac{irs}, are bounded uniformly from below by $D_0N^{{\gamma}/{2}}$ for some $D_0$ and $\gamma > 1-q$. This assumption guarantees the validity of the far-field model through the asymptotic analyses. More details in this respect can be followed in \cite{ivrlac2010toward,ivrlac2014multiport}.
\end{remark}

Proposition~\ref{th:0} gives the large-system statistics of the capacity term for an arbitrary sub-linear scaling of the \ac{irs} surface.

\begin{proposition}
	\label{th:0}
	Let the area of the \ac{irs} scale sub-linearly with $N$, i.e., $A_{\rm IRS} =  A_0 N^{1-q}$ for some fixed $A_0$ and $0\leq q <1$. Let the phase-shifts be set to %the vector $ \bbeta^\star = [\beta^\star_1,\ldots,\beta^\star_N]^\trp$ with
	\begin{align} 
		\beta_n^\star= \frac{2\pi }{\lambda} \brc{\Pi_n\left( \varphi_{\mathrm{r}1} ,\theta_{\mathrm{r}1} \right)  + \Pi_n\left( \varphi_{\mathrm{t}1} ,\theta_{\mathrm{t}1} \right)}.  \label{eq:beta_n_star}
	\end{align}
	Assume the maximum eigenvalue of the \ac{irs} covariance matrix $\mR$, denoted by $\lambda_{\max}$, grows with $N$ sub-linearly, i.e.,
	\begin{align}
		\lim_{N\rightarrow\infty} {\lambda_{\max}}{N}^{-1} = 0, \label{eq:lamMax}
	\end{align}
	and satisfies
	\begin{align}
		\lim_{N\rightarrow\infty} \brc{\lambda_{\max} A_{\rm IRS}}^{-1} = 0. \label{eq:const_IRS}
	\end{align}
	Then, for large $N$, $\mathcal{C}$ is well-approximated by a real Gaussian random variable whose mean and standard deviation are %given by
	\begin{subequations}
		\begin{align}
			\mu_{\mathcal{C}} &=   \log_2 \left(1 +\rho M \mu \right), \\
			\sigma_{\mathcal{C}} &= \frac{ \displaystyle \rho M  \log_2 \mathrm{e} }{1+ \rho M \mu }  \sqrt{ \omega \eta + \eta + \frac{M-1}{M} \alpha_{\mathrm{ d}} A_M },
		\end{align}
	\end{subequations}
	respectively, for
	\begin{subequations}
		\begin{align}
			\mu &=  \alpha_{\mathrm{ d}} A_M+ \kappa_{\mathrm{ r} } \bar{\alpha}_N N^2 + \bar{\alpha}_N \mathbf{\bar{h}}_{\mathrm{r}}^\her \mR \mathbf{\bar{h}}_{\mathrm{r}} ,\\
			\eta &=  \frac{\alpha_{\mathrm{ d}}A_M}{M} +  \bar{\alpha}_N \mathbf{\bar{h}}_{\mathrm{r}}^\her \mR \mathbf{\bar{h}}_{\mathrm{r}} ,\\
			\omega &=  2\kappa_{\mathrm{ r} } \bar{\alpha}_N N^2 + \bar{\alpha}_N \mathbf{\bar{h}}_{\mathrm{r}}^\her \mR \mathbf{\bar{h}}_{\mathrm{r}} ,
		\end{align}
	\end{subequations}
	and $\bar{\alpha}_N$ being defined as $\bar{\alpha}_N= {\alpha_{\mathrm{ r}} \alpha_{\mathrm{ s}} } A_N^2 /\brc{1+\kappa_\mathrm{r}}$.
	%\end{align}
\end{proposition}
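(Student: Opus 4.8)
The plan is to collapse the $N$-term reflection sum into a single effective scalar gain and then treat $\norm{\bh}^2$ as a smooth function of a handful of low-dimensional random variables, to which a central-limit argument and the delta method can be applied. First I would substitute the phase-shifts \eqref{eq:beta_n_star} into $h_m$. Writing both $[\mathbf{a}_N\brc{\varphi_{\mathrm{r}1},\theta_{\mathrm{r}1}}]_n$ and $\bar h_{\mathrm r,n}$ as unit-modulus phasors, the chosen $\beta_n^\star$ cancels the IRS angle-of-arrival phase carried by $t_{nm}$ and co-phases the line-of-sight part of $h_{\mathrm r,n}$; since the transmit array factor $[\mathbf{a}_M]_m$ (evaluated at the transmit angle-of-departure) is then the only index-$m$ factor left in the reflection term, the end-to-end channel collapses to $h_m = h_{\mathrm d,m} + g\,[\mathbf{a}_M]_m^\ast$ with the \emph{common} scalar $g=\sqrt{\bar\alpha_N}\brc{\sqrt{\kappa_{\mathrm r}}\,N+\bar{\mathbf h}_{\mathrm r}^{\her}\tilde{\mathbf h}_{\mathrm r}}$, where $[\mR]_{nn}=1$ has been used to reduce the line-of-sight self-sum to $N$. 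Being an affine function of the jointly Gaussian vector $\tilde{\mathbf h}_{\mathrm r}$, the gain $g$ is \emph{exactly} complex Gaussian, $g\sim\mathcal{CN}\brc{\sqrt{\kappa_{\mathrm r}\bar\alpha_N}\,N,\ \bar\alpha_N\,\bar{\mathbf h}_{\mathrm r}^{\her}\mR\bar{\mathbf h}_{\mathrm r}}$; its squared mean $\kappa_{\mathrm r}\bar\alpha_N N^2$ and its variance $\bar\alpha_N\bar{\mathbf h}_{\mathrm r}^{\her}\mR\bar{\mathbf h}_{\mathrm r}$ are precisely the building blocks of $\mu$, $\eta$ and $\omega$.

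Next I would expand $\norm{\bh}^2=\norm{\bh_{\mathrm d}}^2+2\,\mathrm{Re}\set{g^\ast\,\mathbf{a}_M^\trp\bh_{\mathrm d}}+M\abs{g}^2$, using $\abs{[\mathbf{a}_M]_m}=1$. Because $\bh_{\mathrm d}$ and $g$ are independent and circularly symmetric, all the mixed odd-order Gaussian moments vanish, so the three blocks are pairwise uncorrelated and the cross term is zero-mean. Taking expectations gives $\Ex{\norm{\bh}^2}=M\mu$, whence $\mu_{\mathcal C}\approx\log_2\brc{1+\rho\,\Ex{\norm{\bh}^2}}$. For the variance I would add the three block variances: $\mathrm{Var}\brc{\norm{\bh_{\mathrm d}}^2}=M\brc{\alpha_{\mathrm d}A_M}^2$ (a sum of $M$ exponentials), the cross-term variance $2\,\Ex{\abs g^2}M\alpha_{\mathrm d}A_M$, and $M^2\mathrm{Var}\brc{\abs g^2}$ with $\mathrm{Var}\brc{\abs g^2}=2\kappa_{\mathrm r}\bar\alpha_N N^2\,\bar\alpha_N\bar{\mathbf h}_{\mathrm r}^{\her}\mR\bar{\mathbf h}_{\mathrm r}+\brc{\bar\alpha_N\bar{\mathbf h}_{\mathrm r}^{\her}\mR\bar{\mathbf h}_{\mathrm r}}^2$ from the non-central squared modulus. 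Collecting these and expressing the dominant result through $\eta$ and $\omega$ yields the expression under the square root.

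It then remains to certify that $\norm{\bh}^2$ is asymptotically Gaussian. Splitting $g=\sqrt{\kappa_{\mathrm r}\bar\alpha_N}\,N+\delta g$, its centred version is the sum of the two conditionally Gaussian pieces $2\,\mathrm{Re}\set{g^\ast\mathbf{a}_M^\trp\bh_{\mathrm d}}$ and $2M\sqrt{\kappa_{\mathrm r}\bar\alpha_N}\,N\,\mathrm{Re}\set{\delta g}$, whose variance diverges like $N^{2-2q}$ when $q<1$, plus the genuinely non-Gaussian remainders $M\abs{\delta g}^2$ (chi-square type) and $\norm{\bh_{\mathrm d}}^2-\Ex{\norm{\bh_{\mathrm d}}^2}$ (Gamma type). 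Condition \eqref{eq:lamMax}, together with $\bar{\mathbf h}_{\mathrm r}^{\her}\mR\bar{\mathbf h}_{\mathrm r}\le\lambda_{\max}N$, makes $\Ex{\abs{\delta g}^2}/\brc{\kappa_{\mathrm r}\bar\alpha_N N^2}\to0$, so the chi-square remainder is of smaller order than the Gaussian fluctuation, while \eqref{eq:const_IRS} keeps the remaining non-Gaussian contribution negligible on the scale of the standard deviation; hence the standardized $\norm{\bh}^2$ converges to a standard normal.

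Finally I would push this through $\mathcal C=\log_2\brc{1+\rho\norm{\bh}^2}$ by the delta method: linearizing $f\brc{x}=\log_2\brc{1+\rho x}$ about $x_0=M\mu$ gives the slope $f'\brc{M\mu}=\rho\log_2\mathrm{e}/\brc{1+\rho M\mu}$, so $\mathcal C$ is asymptotically Gaussian with mean $\mu_{\mathcal C}$ and standard deviation $f'\brc{M\mu}\sqrt{\mathrm{Var}\brc{\norm{\bh}^2}}$, which is the stated $\sigma_{\mathcal C}$. I expect the main obstacle to be these two approximation steps rather than the moment bookkeeping: establishing genuine (not merely moment-matched) Gaussianity of $\norm{\bh}^2$ despite the finite, non-vanishing, non-Gaussian direct-path and self-interference contributions---this is exactly where a Lyapunov/Berry--Esseen-type estimate must invoke the sub-linear growth conditions \eqref{eq:lamMax}--\eqref{eq:const_IRS} to dominate the non-Gaussian remainders---and, relatedly, justifying that the concave map $x\mapsto\log_2\brc{1+\rho x}$ can be replaced by its first-order Taylor expansion, i.e. that its curvature term is asymptotically negligible thanks to the concentration of $\norm{\bh}^2$.
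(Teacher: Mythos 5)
Your proposal is correct and follows essentially the same route as the paper's (sketched) proof: you first derive the distribution of the end-to-end SNR via the co-phasing decomposition $h_m = h_{\mathrm{d},m} + g\,[\mathbf{a}_M]_m^\ast$ with $g$ exactly complex Gaussian, then control the moments with Rayleigh-quotient bounds on $\mR$ (your $\bar{\mathbf{h}}_{\mathrm{r}}^\her \mR \bar{\mathbf{h}}_{\mathrm{r}} \leq \lambda_{\max} N$), and finally send $N\rightarrow\infty$ to obtain the Gaussian approximation through a CLT plus delta-method argument, which is precisely the three-step outline the paper gives before deferring details to its extended version. The lower-order mismatch between your exact variance and the stated $\sigma_{\mathcal{C}}$ expression is harmless, since both agree in the dominant terms under \eqref{eq:lamMax}, consistent with the proposition being a large-$N$ approximation.
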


\begin{proof}
The proof is given in three steps: First, the distribution of the end-to-end \ac{snr} is derived. The mean and variance are then bounded using bounds on the Rayleigh quotient of $\mR$. By sending $N\rightarrow\infty$, the large-system approximation is derived. Due to page limit, we skip the details and refer the reader to the extended version \cite{bereyhi2022hardening}.
\end{proof}

%Proposition~\ref{th:0} characterizes asymptotically the random variable $\mathcal{C}$, and hence it determines various capacity metrics~of~the channel. As it is shown in Section~\ref{sec:Numerics}, the asymptotic~result~approximates tightly $\mathcal{C}$, even for small choices of $N$.

Proposition~\ref{th:0} illustrates how \ac{irs}-aided multi-antenna channels harden. It also specifies the required scaling~order~of~the \ac{irs} to guarantee channel hardening. % %We address these aspects in the forthcoming section. However, before we start, 
It is further worth noting few remarks regarding the constraints on $\lambda_{\max}$: %in Proposition~\ref{th:0}:
\begin{enumerate}
	\item Proposition~\ref{th:0} assumes a sub-linearly growing $\lambda_{\max}$. This can be interpreted as follows: $\lambda_{\max}$ is uniformly bounded from above as $\lambda_{\max} \leq aN^u$ for some real $a$ and $0\leq u <1$. This is not a strong constraint, as 
	%\begin{align}
		$\lambda_{\max} \leq \tr{\mR} = N$.
	%\end{align}
	\item In general, the growth order of $\lambda_{\max}$ is mutually coupled with $q$; see \cite[Section~III-C]{bereyhi2022hardening}. This is seen by considering the two extreme cases of $q$. For $q=0$, $\mR=\mI_N$~is~feasible, and hence $\lambda_{\max} = 1$, i.e., the uniform upper-bound~is valid for $a =1$ and  $u =0$. As $q \rightarrow 1$, $\mR$ tends to a rank-one matrix, and hence $\lambda_{\max} = N$, i.e., $a = u =1$ in the upper-bound of $\lambda_{\max}$. For less rank-deficient covariance matrices, the bound is given for some $u\in \dbc{0,1}$. %Similar to the constraint on the physical dimension of the \ac{irs}, we show later that despite assuming $0\leq u <1$, Proposition~\ref{th:0} is still valid for the extreme case of $u=1$. 
	\item  Considering the scaling of the \ac{irs} area and $\lambda_{\max}$, i.e., $q$ and $u$, the constraint in \eqref{eq:const_IRS} restricts $u$ in the upper-bound of $\lambda_{\max}$ to satisfy $u\geq q$. In the extended version~of the work \cite[Section~III-C]{bereyhi2022hardening}, it is further shown that this is the case for an \ac{irs} covariance matrix which is derived for the standard Rayleigh fading model in \cite{bjornson2020rayleigh}.
\end{enumerate}

\section{Asymptotic Channel Hardening}
The classical channel hardening result, i.e., the initial study in \cite{hochwald2004multiple}, indicates that $\mathcal{C}$ converges to a deterministic variable,~as the transmit array size $M$ grows unboundedly large. We are however interested in a different asymptotic regime; namely, a scenario with \textit{finite} transmit antennas but an unboundedly large number of reflecting elements, i.e., fixed $M$ and $N\rightarrow\infty$. %Our main result shows that in this alternative asymptotic regime, the channel still hardens, under a very mild constraint on the correlation among the \ac{irs} elements.

\begin{corollary}
	\label{cor:1}
	Let the \ac{irs} area scale as $A_{\rm IRS} =  A_0 N^{1-q}$ for some fixed $A_0$. Assume that the phase-shifts are set to \eqref{eq:beta_n_star}, and let $\lambda_{\max}$ be uniformly bounded from above as $\lambda_{\max} \leq a N^u$, for some $0\leq q < u <1$. As $N\rightarrow\infty$, the mean of $\mathcal{C}$ grows large and its variance converges to zero.
\end{corollary}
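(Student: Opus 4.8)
The plan is to obtain Corollary~\ref{cor:1} directly from Proposition~\ref{th:0} by inserting the area scaling into the closed-form expressions for $\mu_{\mathcal{C}}$ and $\sigma_{\mathcal{C}}$ and then tracking the polynomial growth orders in $N$. Since $A_{\rm IRS}=A_0 N^{1-q}$ and the IRS is a uniform planar array with $N$ elements, each element occupies area $A_N=A_{\rm IRS}/N=A_0 N^{-q}$, so that $\bar{\alpha}_N=\alpha_{\mathrm{r}}\alpha_{\mathrm{s}}A_N^2/\brc{1+\kappa_{\mathrm{r}}}=c\,N^{-2q}$ with the fixed constant $c=\alpha_{\mathrm{r}}\alpha_{\mathrm{s}}A_0^2/\brc{1+\kappa_{\mathrm{r}}}$. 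The only object in $\mu$, $\eta$, $\omega$ that is not already an explicit power of $N$ is the quadratic form $\mathbf{\bar{h}}_{\mathrm{r}}^\her\mR\mathbf{\bar{h}}_{\mathrm{r}}$, which I would sandwich by the Rayleigh-quotient bound
\begin{align}
0 \leq \mathbf{\bar{h}}_{\mathrm{r}}^\her\mR\mathbf{\bar{h}}_{\mathrm{r}} \leq \lambda_{\max}\norm{\mathbf{\bar{h}}_{\mathrm{r}}}^2 = \lambda_{\max} N \leq a\,N^{1+u},
\end{align}
where $\norm{\mathbf{\bar{h}}_{\mathrm{r}}}^2=N$ because $\mathbf{\bar{h}}_{\mathrm{r}}=\mathbf{a}_N\brc{\varphi_{\mathrm{t}1},\theta_{\mathrm{t}1}}$ has unit-modulus entries, and the last step uses the hypothesis $\lambda_{\max}\leq aN^u$.

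First I would establish divergence of the mean. Substituting $\bar{\alpha}_N=cN^{-2q}$ into $\mu$ produces the leading term $\kappa_{\mathrm{r}}\bar{\alpha}_N N^2=\kappa_{\mathrm{r}}c\,N^{2-2q}$, whose exponent $2-2q$ is strictly positive for $q<1$; the constant $\alpha_{\mathrm{d}}A_M$ and the term $\bar{\alpha}_N\mathbf{\bar{h}}_{\mathrm{r}}^\her\mR\mathbf{\bar{h}}_{\mathrm{r}}=O(N^{1+u-2q})$ are of strictly lower order, since $1+u-2q<2-2q$ is equivalent to $u<1$. Hence $\mu\to\infty$, and therefore $\mu_{\mathcal{C}}=\log_2\brc{1+\rho M\mu}\to\infty$, which is the first assertion.

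Next I would show that the variance collapses. By the leading-order behaviour of $\mu$ just found, the prefactor obeys $\rho M\log_2\mathrm{e}/\brc{1+\rho M\mu}=\Theta\brc{N^{-(2-2q)}}$, so its square contributes $\Theta\brc{N^{-(4-4q)}}$. Inside the radical the dominant contribution is $\omega\eta$, as $\eta$ and the constant $\tfrac{M-1}{M}\alpha_{\mathrm{d}}A_M$ are of lower order than $\omega$. Using $\omega=2\kappa_{\mathrm{r}}c\,N^{2-2q}+O(N^{1+u-2q})=O(N^{2-2q})$ together with $\eta=O(N^{1+u-2q})$ — the latter exponent being positive because $u>q$ and $q<1$ give $1+u-2q=(1-q)+(u-q)>0$ — I obtain $\omega\eta=O(N^{3+u-4q})$. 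Combining the two pieces,
\begin{align}
\sigma_{\mathcal{C}}^2 = O\!\brc{N^{-(4-4q)}\cdot N^{3+u-4q}} = O\!\brc{N^{u-1}},
\end{align}
which tends to $0$ since $u<1$. This proves that the variance of $\mathcal{C}$ vanishes.

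The order bookkeeping is routine; the step requiring care is controlling the quadratic-form contribution, the single term not presented as an explicit power of $N$, and checking that it stays subdominant in $\mu$ while remaining inside the $N^{1+u}$ envelope when it surfaces in $\omega$ and $\eta$. This is precisely where the hypothesis $q<u<1$ enters: on the one hand it guarantees that the Rician term $\kappa_{\mathrm{r}}c\,N^{2-2q}$ governs the growth of $\mu$ (so $\kappa_{\mathrm{r}}>0$ is implicitly used here), and on the other that the decay $N^{-(4-4q)}$ of the squared prefactor strictly outpaces the growth $N^{3+u-4q}$ of $\omega\eta$. I would flag the degenerate case $\kappa_{\mathrm{r}}=0$ separately, where the leading $N^{2-2q}$ term is absent and divergence of $\mu$ must instead be extracted from the lower Rayleigh-quotient bound on $\bar{\alpha}_N\mathbf{\bar{h}}_{\mathrm{r}}^\her\mR\mathbf{\bar{h}}_{\mathrm{r}}$.
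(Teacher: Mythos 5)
Your proposal is correct and takes essentially the paper's own route: the paper merely outsources the two key bounds, $\mu_{\mathcal{C}} \geq b + \brc{1-q}\log_2 N$ and $\sigma_{\mathcal{C}}^2 \leq c N^{u-1}$, to Proposition~2 of the extended version, and those bounds are obtained exactly as you obtain them---Rayleigh-quotient control of $\mathbf{\bar{h}}_{\mathrm{r}}^\her \mR \mathbf{\bar{h}}_{\mathrm{r}}$ via $\lambda_{\max}\norm{\mathbf{\bar{h}}_{\mathrm{r}}}^2 \leq a N^{1+u}$ together with order bookkeeping in $N$---so your $O\brc{N^{u-1}}$ variance bound coincides with the cited one and your mean bound is even slightly stronger. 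Your observation that $\kappa_{\mathrm{r}}>0$ is implicitly used is accurate and consistent with the paper, whose channel model \eqref{eq:hr} posits a Rician IRS--receiver link with both LoS and NLoS components (indeed, for $\kappa_{\mathrm{r}}=0$ the variance would not vanish at all, so that degenerate case lies outside the corollary's scope rather than requiring a separate argument).
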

\begin{proof}
The proof follows from \cite[Proposition~2]{bereyhi2022hardening}, where it is shown that under the given constraints, there exist real scalars $b$ and $c$, such that $\mu_{\mathcal{C}} \geq b + \brc{1-q}  \log_2 N$ and $\sigma_{\mathcal{C}}^2 \leq {{c}{N^{u-1}}}$. Considering $0\leq q < u <1$, we can conclude that $\mu_{\mathcal{C}}$ grows large, and $\sigma_{\mathcal{C}}^2$ converges to zero as $N\rightarrow \infty$. 
\end{proof}

In the extended version \cite[Section~III-C]{bereyhi2022hardening}, it is shown for the Rayleigh fading model that the constraint $0\leq q < u \leq1$ is valid with $u=1$ if $q=1$. Noting that $q=1$ represents the case with fixed \ac{irs} area, Corollary~\ref{cor:1} indicates that the \ac{irs}-aided channel hardens, if the physical dimensions of the \ac{irs} grow large with $N$. The growth is however sufficient to be sub-linear. From an implementational viewpoint, it is a valid constraint. In fact, due to the restricted physical dimensions of each reflecting element, the distance between two neighboring elements on the surface cannot be set below a certain limit, and hence the overall area of \ac{irs} always increases in $N$.

\section{Numerical Experiments}
\label{sec:Numerics}
We now confirm the accuracy of the derivations for practical system dimensions through numerical experiments. We consider a basic scenario in which the transmitter is equipped with a $2 \times 2$ planar array and the \ac{irs} contains $N=256$ reflecting elements. The elements on the \ac{irs} are assumed to be aligned on a rectangle with $N_\xx = 8$ horizontal elements and $N_\yy = 32$ vertical elements. The elements at both transmitter and receiver are distanced with $\ell_\xx = d_\xx = \ell_\yy = d_\yy = \lambda /2$, where $\lambda$ denotes the wave-length. We further set $\alpha_{\mathrm{ d}} A_M = \alpha_{\mathrm{ r}} A_N= \alpha_{\mathrm{ s}} A_N=1$ and $\log \kappa_{\mathrm{ r} } = 0$ dB. To generate the covariance matrix, we invoke \cite[Proposition~1]{bjornson2020rayleigh} and set entry $\brc{n,n'}$ of $\mR$ to be
\begin{align}
	\dbc{ \mR }_{n n'} = \mathrm{sinc} \brc{\frac{2}{\lambda} \sqrt{d_\xx^2 \Delta_\xx^2 + d_\yy^2 \Delta_\yy^2 } } \label{eq:RR}
\end{align}
with %$D_\xx$ and $D_\yy$ being
%\begin{subequations}
	%\begin{align}
		$\Delta_\xx =  {i_N\brc{n} - i_N\brc{n'}}$  and %\\
		$\Delta_\yy =  {j_N\brc{n} - j_N\brc{n'}}$.
	%\end{align}
%\end{subequations}
The \ac{aoa} and \acp{aod} are further set to $\brc{\varphi_{\mathrm{r}1},\theta_{\mathrm{r}1}} = \brc{\pi/6,\pi/3 }$, $\brc{\varphi_{\mathrm{t}1},\theta_{\mathrm{t}1}} = \brc{\pi/8,2\pi/3 }$ and $\brc{\varphi_{\mathrm{t}2},\theta_{\mathrm{t}2}} = \brc{\pi/7,\pi/5 }$. The power constraint is set to $\rho=1$. 

We collect $10^5$ realizations of the channel and determine~$\mathcal{C}$ for each realization.  The empirical density is then determined from the collected data and compared with Proposition~\ref{th:0} in Fig.~\ref{fig:dist}. As the figure shows, the analytical result of Proposition~\ref{th:0} almost perfectly matches the empirical density. 
\begin{figure}
	\centering
	\input{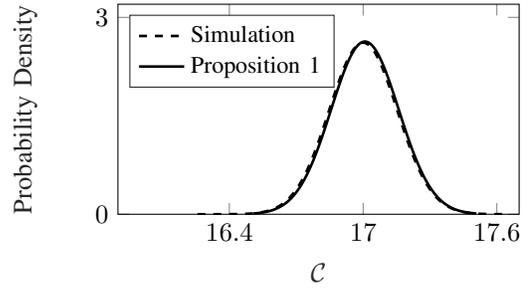}
	\caption{Density of $\mathcal{C}$ and the empirical density fitted to the data.}
	\label{fig:dist}
\end{figure}

As the next experiment, we replace the rectangular \ac{irs} in the above setting with a square array, i.e., $N_\xx = N_\yy$ and let $N$ grow gradually from $N= 64$ to $N=1296$ assuming that the distance between each two neighboring elements remains $\lambda/2$. The variance of $\mathcal{C}$ is then plotted against $N$ using both numerical data and asymptotic expression~in~Proposition~\ref{th:0}.~As Fig.~\ref{fig:upp} demonstrates, the analytical results closely approximate numerical simulations, even for rather small choices of $N$. The figure further depicts the drop of $\sigma^2_{\mathcal{C}}$ against $N$ indicating the hardening of the end-to-end channel. %with the growth of \ac{irs} size.

\begin{figure}
	\centering
	\input{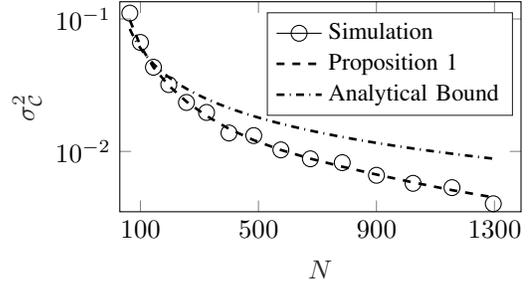}
	\caption{Asymptotic channel hardening with respect to the IRS size.}
	\label{fig:upp}
\end{figure}

Fig.~\ref{fig:upp} further compares $\sigma^2_{\mathcal{C}}$ to the analytical upper-bound on the variance of $\mathcal{C}$ derived in the extended version of the work \cite[Proposition~2]{bereyhi2022hardening}. Interestingly, the suggested upper bound gives a \textit{pessimistic} approximation of the hardening speed. In fact, the true variance drops~much faster than the upper bound. This observation suggests that the end-to-end channel hardens rather fast, even with strongly correlated reflecting elements. The consistency of this conjecture is demonstrated via several numerical investigations in the extended version \cite{bereyhi2022hardening}. Further discussions on this respect are skipped due to page limitation. The interested reader is referred to the extended version \cite{bereyhi2022hardening}.

\section{Conclusions}
By increasing the number of reflecting elements, \ac{irs}-aided channels harden as long as the physical dimensions of the \ac{irs} grow as well. However, the growth order can be significantly sub-linear. This is realistic from the implementational viewpoint, as neighboring elements on an \ac{irs} cannot be set closer than a threshold distance, due to their physical dimensions.

The above result indicates that in \ac{irs}-aided systems, even by packing the reflecting elements compactly~on~the \ac{irs}, the end-to-end channel hardens as the number of elements grows large. This finding shows that by enhancing a multi-antenna system of feasible dimensions via large \acp{irs}, the large-scale properties of a multi-antenna system can be obtained.

The results of this study can be used to investigate \ac{irs}-aided multi-antenna systems in various respects. Some examples are addressed in the extended version of this work \cite{bereyhi2022hardening}.

\bibliographystyle{IEEEtran}
\balance
\bibliography{IEEEabrv,references}
\end{document}